\begin{document}

\newtheorem{theorem}{Theorem}[section]
\newtheorem{definition}[theorem]{Definition}
\newtheorem{lemma}[theorem]{Lemma}
\newtheorem{remark}[theorem]{Remark}
\newtheorem{proposition}[theorem]{Proposition}
\newtheorem{corollary}[theorem]{Corollary}
\newtheorem{example}[theorem]{Example}

\numberwithin{equation}{section}
\newcommand{\ep}{\varepsilon}
\newcommand{\R}{{\mathbb  R}}
\newcommand\C{{\mathbb  C}}
\newcommand\Q{{\mathbb Q}}
\newcommand\Z{{\mathbb Z}}
\newcommand{\N}{{\mathbb N}}

\title[The free rigid body dynamics: generalized versus classic]{The free rigid body dynamics: generalized versus classic}

\author{R\u{a}zvan M. Tudoran}

\address{The West University of Timi\c soara,\\
Faculty of Mathematics and Computer Science, \\
Department of Mathematics,\\
Blvd. Vasile P\^ arvan, No. 4,\\
300223 - Timi\c soara,\\
Romania.}

\email{tudoran@math.uvt.ro}

\subjclass{70H05; 70E15; 70E40.}

\keywords{Hamiltonian dynamics; quadratic Hamiltonian systems; rigid body dynamics; normal forms.}

\date{February 22, 2011}
\dedicatory{}

\begin{abstract}
In this paper we analyze the normal forms of a general quadratic Hamiltonian system defined on the dual of the Lie algebra $\mathfrak{o}(K)$ of real $K$ - skew - symmetric matrices, where $K$ is an arbitrary $3\times 3$ real symmetric matrix. A consequence of the main results is that any first-order autonomous three-dimensional differential equation possessing two independent quadratic constants of motion which admits a positive/negative definite linear combination, is affinely equivalent to the classical "relaxed" free rigid body dynamics with linear controls.
\end{abstract}

\maketitle

\section{Introduction}
\label{section:one}

When thinking about quadratic and homogeneous Hamiltonian systems on the dual of a Lie algebra, the first example that comes to our mind is the system
describing the rotations of a free rigid body around its center of mass. This system was derived by Euler in 1758 (see e.g. \cite{euler}) then generalized by Poincar\'e (see \cite{poincare}), and then again later by Arnold (see e.g. \cite{arnoldcarte}) starting from the original $\mathfrak{so}(3)$ Lie algebra, to a general Lie algebra. The mathematical literature contains a huge 
amount of writings concerning Euler's equations, from their original form to the most general forms (see e.g. \cite{euler}, \cite{kowalewski}, \cite{abraham}, \cite{bogo1}, \cite{bogo2}, \cite{holm1}, \cite{holm2}, \cite{holm3}, \cite{ratiu},\cite{ec}, \cite{marsdenratiu}, \cite{rotorpendul}, \cite{tarama}). 

In this paper we study a large class of Hamiltonian systems defined on the dual of the Lie algebra $\mathfrak{o}(K)$ of real $K$ - skew-symmetric matrices, where $K$ is an arbitrary $3\times 3$ real symmetric matrix. More precisely, we consider the Hamiltonian systems on $(\mathfrak{o}(K))^*$, generated by quadratic Hamiltonian functions, and we show that under suitable conditions, the classical "relaxed" free rigid body dynamics, represents the normal form of a Hamiltonian system on $(\mathfrak{o}(K))^*$, generated by a quadratic and homogeneous Hamiltonian function, and respectively the classical "relaxed" free rigid body dynamics with three linear controls, represents the normal form of a Hamiltonian system on $(\mathfrak{o}(K))^*$, generated by a quadratic Hamiltonian function. As a consequence of these results we get that the extended free rigid body introduced in \cite{tarama} is equivalent to the classical "relaxed" free rigid body. Note that by classical "relaxed" free rigid body dynamics, we mean the dynamics generated by the classical Euler's equations of the free rigid body on $(\mathfrak{so}(3))^*$, where the Hamiltonian is supposed to be generated by an arbitrary $3\times 3$ diagonal real matrix. 

The most important consequence of the results of this paper is that any first-order autonomous three-dimensional differential equation possessing two independent quadratic constants of motion which admits a positive/negative definite linear combination, is affinely equivalent to the classical "relaxed" free rigid body dynamics with linear controls.

The structure of the paper is as follows: in the second section we prepare the framework of our study, by introducing the family of Hamiltonian systems to be analyzed, and recall some of the main geometrical properties of the Poisson configuration manifold. In the third section of this article one compute explicitly the normal form of a general quadratic and homogeneous Hamiltonian system on $(\mathfrak{o}(K))^{*}$. More precisely, one shows that if there exists $\alpha,\beta\in\mathbb{R}$ such that the symmetric real matrix $\alpha A+\beta K$ is positive definite, where $A$ denotes the symmetric matrix generating the quadratic and homogeneous Hamiltonian function $H_A$, then the Hamiltonian system $((\mathfrak{o}(K))^{*},\{\cdot,\cdot\}_{K},H_{A})$, is equivalent to the relaxed free rigid body dynamics. In the fourth section, one shows that the normal form of a general quadratic Hamiltonian system on $(\mathfrak{o}(K))^{*}$ in the case when there exists $\alpha,\beta\in\mathbb{R}$ such that the symmetric real matrix $\alpha A+\beta K$ is positive definite, is equivalent to the relaxed free rigid body dynamics with three linear controls. In the last section, one gives a unified and also generalized formulation of the results obtained in the previous sections, by analyzing the case of a general quadratic Hamiltonian system on a natural extension of $(\mathfrak{o}(K))^*$.

 For details on Poisson geometry and Hamiltonian dynamics, see, e.g. \cite{abraham}, \cite{arnoldcarte}, \cite{marsdenratiu}, \cite{holm1}, \cite{holm2}, \cite{holm3}, \cite{ratiurazvan}.

\section{Quadratic Hamiltonian systems on $(\mathfrak{o}(K))^*$}

As the purpose of this paper is to study the quadratic Hamiltonian systems on
$(\mathfrak{o}(K))^*$ from the Poisson geometry and dynamics point of view, the first step in this approach is to
prepare the geometric framework of the problem.

Let us recall first some generalities about the Lie algebra $\mathfrak{o}(K)$ of $3\times3$ real $K$-skew-symmetric matrices, and his dual space $(\mathfrak{o}(K))^*$, where $K$ is an arbitrary $3\times3$ real symmetric matrix. In the case when $K$ is nondegenerate, the Lie algebra $\mathfrak{o}(K):=\{A\in \mathfrak{gl}(3;\R):\ A^{T}K+KA=O_3\}$ is the Lie algebra of the Lie group $O(K):=\{A\in GL(3;\R):\ A^{T}KA=K \}$ of $K$-orthogonal $3\times3$ matrices. 

Let us now recall a Lie algebra isomorphism between $\mathfrak{o}(K)$ and $\R^3$. For details regarding this isomorphism, see e.g. \cite{holm1}, \cite{marsdenratiu}.
\begin{proposition}\label{pr.2.1}
The Lie algebras $\left(\mathfrak{o}(K),+,\cdot_{\R},[\cdot,\cdot]\right)$ and $\left(\R^3,+,\cdot_{\R},\times_{K}\right)$ are isomorphic, where $[\cdot,\cdot]$ is the commutator of matrices, and respectively $u\times_{K}v:=K(u\times v)$, for any $u,v\in\mathbb{R}^{3}$.
\end{proposition}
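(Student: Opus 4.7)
The plan is to construct an explicit linear isomorphism
\[
\phi : \R^{3} \to \mathfrak{o}(K), \qquad \phi(u) := \widehat{u}\, K,
\]
where $\widehat{\cdot} : \R^{3} \to \mathfrak{so}(3)$ is the classical hat map determined by $\widehat{u}\, v = u \times v$. I shall tacitly assume $K$ is invertible, which is implicit in the statement: otherwise $\dim \mathfrak{o}(K) > 3$ and no such isomorphism can exist. That $\phi(u) \in \mathfrak{o}(K)$ follows from $K^{T} = K$ and $\widehat{u}^{T} = -\widehat{u}$, since
$(\widehat{u}K)^{T} K + K(\widehat{u}K) = -K\widehat{u}K + K\widehat{u}K = 0$.
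Linearity of $\phi$ is immediate, injectivity is clear from invertibility of $K$, and surjectivity follows by dimension count: $A \in \mathfrak{o}(K)$ iff $KA$ is skew-symmetric, which exhibits a vector-space isomorphism $\mathfrak{o}(K) \cong \mathfrak{so}(3) \cong \R^{3}$.

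The heart of the argument is the bracket identity $[\widehat{u}K, \widehat{v}K] = \widehat{K(u \times v)}\, K$. The plan is to reduce both sides to the common expression
\[
(\det K)\bigl[(K^{-1}v)u^{T} - (K^{-1}u)v^{T}\bigr]
\]
using three standard facts from the hat-map calculus:
\[
\widehat{Ka} = (\det K)\, K^{-1}\widehat{a}\, K^{-1}, \qquad \widehat{a}\,\widehat{b} = b\, a^{T} - (a \cdot b)\, I, \qquad \widehat{a \times b} = b\, a^{T} - a\, b^{T}.
\]
The first is the conjugation formula for the hat map, simplified by $K = K^{T}$; the third is a consequence of the second. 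Applied to the right-hand side, the first identity gives $\widehat{K(u \times v)}\, K = (\det K) K^{-1}\widehat{u \times v}$, which by the third identity equals the target expression. For the left-hand side, the first identity absorbs the interior $K$ as $K\widehat{v}K = \widehat{(\det K) K^{-1} v}$; the product identity then expands $\widehat{u}\bigl(K\widehat{v}K\bigr)$ and its $u \leftrightarrow v$ counterpart, and in the difference the scalar $(u \cdot K^{-1} v)\, I$ terms cancel by the symmetry of $K$, once again yielding the target expression.

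The main obstacle is precisely this bracket verification: a brute-force expansion (in Levi-Civita components, or by evaluating on test vectors) is workable but tedious, and it is easy to waste time chasing less tractable forms. The decisive shortcut is the auxiliary identity $K\widehat{v}K = \widehat{(\det K) K^{-1} v}$, which converts the ``inner $K$'' into another hat matrix; once it is in hand, the remainder reduces to two applications of $\widehat{a}\,\widehat{b} = b\, a^{T} - (a \cdot b)\, I$ together with the symmetry $u \cdot K^{-1} v = v \cdot K^{-1} u$.
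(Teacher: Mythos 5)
Your proof is correct and follows exactly the route the paper indicates: the paper omits the argument (deferring to \cite{holm1}, \cite{marsdenratiu}) but specifies the isomorphism by $Su=\mathbf{s}\times Ku$, which in matrix form is precisely your $\phi(\mathbf{s})=\widehat{\mathbf{s}}\,K$. Your verification of the bracket identity via $K\widehat{v}K=\det(K)\,\widehat{K^{-1}v}$ and $\widehat{a}\,\widehat{b}=ba^{T}-(a\cdot b)I$ checks out, and your explicit restriction to nondegenerate $K$ is the same hypothesis the paper imposes when it defines $\mathfrak{o}(K)$.
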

Note that for $K$ nonsingular, the above isomorphism between $S\in\mathfrak{o}(K)$ and $\bold{s}\in\mathbb{R}^3$, can be defined by using the equation $Su=\bold{s}\times Ku$, $u\in\mathbb{R}^3$.

An immediate consequence of the this proposition is that $(\mathfrak{o}(K))^*\cong(\R^3)^*\cong$ $\R^3$, viewed as a dual of a Lie algebra, it has a natural Poisson structure, namely the "minus" Lie-Poisson structure, which in this case proves to be generated by the Poisson bracket:
$$\{f,g\}_{K}:=-\nabla C_{K}\cdot(\nabla f\times\nabla g),$$ 
for any $f,g\in C^\infty(\R^3,\R)$, where the smooth function $C_{K}\in C^\infty(\R^3,\R)$ is given by
$$C_{K}(u):=\dfrac{1}{2}u^{T}Ku.$$
For more details regarding this bracket and the associated Lie-Poisson dynamics, see e.g. \cite{holm1}.  
\begin{remark}
It is not hard to see that the center of the Poisson algebra $C^\infty(\R^3,\R)$ is generated by the Casimir invariant
$C_{K}\in C^\infty(\R^3,\R)$, $C_{K}(u)=\dfrac{1}{2}u^TKu$.
\end{remark}
Hence, using the above isomorphism between $(\mathfrak{o}(K))^*$ and $\R^3$, a quadratic Hamiltonian system on
$(\mathfrak{o}(K))^*$ is isomorphically represented by a quadratic Hamiltonian system on $\R^3$ as follows:
$$(\R^3,\{\cdot,\cdot\}_{K},H_{(A,\bold{a})}),$$
where the Hamiltonian $H_{(A,\bold{a})}\in C^\infty(\R^3,\R)$ is given by $H_{(A,\bold{a})}(u):=\dfrac{1}{2}u^TAu+u^T\bold{a}$, with $A\in Sym(3)$ a given symmetric matrix, and $\bold{a}\in\mathbb{R}^{3}$.

\begin{remark}\label{rem.2.1}
A quadratic Hamiltonian system on $(\R^3,\{\cdot,\cdot\}_{K})$ is given by:
\begin{equation*}
\dot u=\nabla C_{K}(u)\times\nabla H_{(A,\bold{a})}(u),\ \ u\in\R^3,
\end{equation*}
or, equivalently:
\begin{equation}\label{eq.2.1}
\dot u=(Ku)\times(Au+\bold{a}),\ \ u\in\R^3.
\end{equation}
\end{remark}

\begin{remark}\label{rem.2.2}
For $\bold{a}=\bold{0}$, the Hamiltonian $H_{(A,\bold{0})}=:H_{A}$ becomes a quadratic and homogeneous Hamiltonian. The associated Hamiltonian system $(\R^3,\{\cdot,\cdot\}_{K},H_{A})$, is called a quadratic and homogeneous Hamiltonian system, and is given by:
\begin{equation*}
\dot u=\nabla C_{K}(u)\times\nabla H_{A}(u),\ \ u\in\R^3,
\end{equation*}
or, equivalently:
\begin{equation}\label{eq.2.2}
\dot u=(Ku)\times(Au),\ \ u\in\R^3.
\end{equation}
\end{remark}

\begin{remark}\label{rem.2.4}
For $K$ nondegenerate, the system (\ref{eq.2.1}) can also be regarded as a dynamical system resulting from the reduction by the
symmetry group $O(K)$ of some symplectic dynamics on cotangent bundle $T^{*}O(K)$.
\end{remark}
\begin{remark}\label{pr.2.5}
The dynamics (\ref{eq.2.1}) admits a family of Hamilton-Poisson realizations parametrized by the Lie group $SL(2;\R)$.
More exactly, $(\R^3,\{\cdot,\cdot\}_{\alpha,\beta},H^{\gamma,\delta})$ is a Hamilton-Poisson realization of the
dynamics (\ref{eq.2.1}), where $\left[ {\begin{array}{*{20}c}
   \alpha & \beta  \\
   \gamma & \delta  \\
\end{array}} \right]\in SL(2,\R)$, the bracket $\{\cdot,\cdot\}_{\alpha,\beta}$ is defined by
$$\{f,g\}_{\alpha,\beta}:=-\nabla C^{\alpha,\beta}\cdot(\nabla f\times\nabla g),$$ for any $f,g\in C^\infty(\R^3,\R)$,
and the functions $C^{\alpha,\beta},H^{\gamma,\delta}\in C^\infty(\R^3,\R)$ are given by:
$$C^{\alpha,\beta}(u):=\dfrac{1}{2}[u^{T}(\alpha K+\beta A)u]+\beta u^T \bold{a},$$
$$H^{\gamma,\delta}(u):=\dfrac{1}{2}[u^{T}(\gamma K+\delta A)u]+\delta u^T \bold{a},$$
for any $u\in\R^3$.
\end{remark}

\section{Normal forms of quadratic and homogeneous Hamiltonian systems on $(\mathfrak{o}(K))^{*}$}

In this section we compute explicitly the normal form of a general quadratic and homogeneous Hamiltonian system on $(\mathfrak{o}(K))^{*}$. More exactly, we show that if there exists $\alpha,\beta\in\mathbb{R}$, $\beta\neq 0$, such that $K^{\alpha,\beta}:=\alpha A+\beta K$ is positive definite, then the quadratic and homogeneous Hamiltonian system $(\R^3,\{\cdot,\cdot\}_{K},H_{A})$, is linearly equivalent to the relaxed free rigid body dynamics.

Let us start with the case when the symmetric matrix $K$ is positive definite.
\begin{proposition}\label{pr.5.1}
If the matrix $K$ is positive definite, then the system \eqref{eq.2.2} is linearly equivalent to
the dynamical system:
\begin{equation}\label{hat}
\dot w=w\times(\hat{A}w),\ w\in\R^3,
\end{equation}
where the symmetric matrix $\hat{A}$ is given by $\hat{A}=:L^{-1}A(L^{-1})^{T}$, and $L\in GL(3,\mathbb{R})$ such that $K=LL^{T}$. 
\end{proposition}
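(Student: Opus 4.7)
The idea is to reduce the generalized Lie--Poisson structure on $\R^3$ (which is governed by $K$) to the standard $\mathfrak{so}(3)^*$ cross product by a linear change of coordinates. Since $K$ is symmetric and positive definite, I would start by factoring $K=LL^{T}$ with $L\in GL(3,\R)$ (Cholesky decomposition or the positive square root), and then introduce the new variable $w:=L^{T}u$, so that $u=L^{-T}w$.

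Substituting into \eqref{eq.2.2} gives $Ku=LL^{T}u=Lw$ and $Au=AL^{-T}w=L(L^{-1}AL^{-T})w=L\hat{A}w$, where $\hat{A}$ is exactly the symmetric matrix defined in the statement. Hence
$$\dot{w}=L^{T}\dot{u}=L^{T}\bigl[(Lw)\times(L\hat{A}w)\bigr].$$

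At this point the main tool is the standard identity
$$M^{T}\bigl[(Mx)\times(My)\bigr]=(\det M)(x\times y),\qquad M\in GL(3,\R),\ x,y\in\R^{3},$$
which is a direct consequence of $\det[Mx,My,Mz]=(\det M)\det[x,y,z]$. Applying it with $M=L$ produces $\dot{w}=(\det L)\,(w\times\hat{A}w)$; choosing $L$ with $\det L>0$ (always possible, e.g.\ via Cholesky), the remaining scalar is absorbed by the linear time rescaling $\tau:=(\det L)\,t$, which falls within the notion of linear equivalence of dynamical systems. This yields exactly \eqref{hat}, and the symmetry of $\hat{A}$ is inherited from that of $A$.

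The only nontrivial ingredient is the cross-product identity above, but this is classical and poses no serious obstacle. The whole argument is a direct change-of-variables computation; the key observation that makes everything fall into place is that the substitution $w=L^{T}u$ is exactly the one that transforms $Ku$ into $Lw$, after which the cross-product identity does the rest.
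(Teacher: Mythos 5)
Your argument is correct and uses the same two ingredients as the paper's proof: the factorization $K=LL^{T}$ and the cross-product identity $(Lx)\times(Ly)=\det(L)\,L^{-T}(x\times y)$ (your $M^{T}\bigl[(Mx)\times(My)\bigr]=(\det M)(x\times y)$ is the same statement rearranged). The only real difference is how the leftover scalar $\det L$ is disposed of: you remove it by rescaling time, whereas the paper builds it into the change of variables, taking $u=\det(L^{-1})L^{-T}w$ rather than $u=L^{-T}w$; since the right-hand side is homogeneous of degree two, the extra scalar is then absorbed by the quadratic nonlinearity and no time reparametrization is needed. This matters mildly for the statement as phrased: a linear change of coordinates composed with a time rescaling is an orbital equivalence rather than the claimed \emph{linear} equivalence, so to get the literal conclusion you should trade your time rescaling for the spatial rescaling $w\mapsto(\det L)\,w$ (equivalently, adopt the paper's substitution), which works precisely because the system is quadratic homogeneous and also renders the sign condition $\det L>0$ unnecessary.
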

\begin{proof}
Let us first make the notation $L^{-T}:=(L^{-1})^{T}$. Next, we show that:
\begin{equation*}
u(t)=\operatorname{det} (L^{-1})L^{-T}w(t),
\end{equation*}
where $t\mapsto u(t)$ is a solution of the dynamical system
\eqref{eq.2.2}:
$$\dot u=(Ku)\times(Au),$$
and respectively $t\mapsto w(t)$, is a solution of the dynamical
system
$$\dot w=w\times(\hat{A}w).$$
To prove this assertion, note first that:
\begin{align*}
(Ku)\times (Au)&=(LL^{T}u)\times (Au)\\
&=[LL^{T}\operatorname{det}(L^{-1})L^{-T}w]\times [A\operatorname{det}(L^{-1})L^{-T}w]\\
&=(\operatorname{det}(L^{-1}))^2[(Lw)\times (AL^{-T}w)]\\
&=(\operatorname{det}(L^{-1}))^2[(Lw)\times (L\hat{A}w)]\\
&=(\operatorname{det}(L^{-1}))^{2} \operatorname{det}(L)L^{-T}[w\times (\hat{A}w)]\\
&=\operatorname{det}(L^{-1})L^{-T}[w\times (\hat{A}w)].
\end{align*}
Hence,
\begin{align*}
\dot u=(Ku)\times (Au)&\Leftrightarrow \operatorname{det} (L^{-1})L^{-T}\dot w=\operatorname{det}(L^{-1})L^{-T}[w\times (\hat{A}w)]\\
&\Leftrightarrow\dot w=w\times(\hat{A}w).
\end{align*}
\end{proof}
\begin{remark}
Note that since $K$ is a real, symmetric and positive definite matrix, by Cholesky decomposition we always get a lower triangular matrix $L\in GL(3,\mathbb{R})$ such that $K=LL^{T}$.
\end{remark}

Next proposition study the case when the matrix $K$ is not positively definite, but there exists $\alpha,\beta\in\mathbb{R}$ such that $K^{\alpha,\beta}:=\alpha A+\beta K$ is positive definite. One shows that this case can be reduced to the above studied case.

\begin{proposition}\label{pr.5.2}
If there exists $\alpha,\beta\in\mathbb{R}$, $\beta\neq 0$ such that the symmetric matrix $K^{\alpha,\beta}$ is positive definite, then the system \eqref{eq.2.2} is homothetically equivalent to the dynamical system:
$$\dot p=(K^{\alpha,\beta}p)\times(Ap),\ p\in\R^3.$$
\end{proposition}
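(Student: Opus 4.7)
The statement concerns \emph{homothetic} equivalence, which I interpret as the two flows being related by a constant time rescaling. My plan is therefore to exhibit a scaling $t\mapsto \beta t$ that sends solutions of \eqref{eq.2.2} to solutions of $\dot p=(K^{\alpha,\beta}p)\times(Ap)$. The work will essentially reduce to a single algebraic identity, so I expect no real obstacle; the only hypothesis that really gets used here is $\beta\neq 0$ (positive definiteness of $K^{\alpha,\beta}$ is exploited in the \emph{next} step, where one feeds the transformed system into Proposition \ref{pr.5.1}).

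The key calculation is to expand the right-hand side of the target system using bilinearity of the cross product and the definition $K^{\alpha,\beta}=\alpha A+\beta K$:
\begin{equation*}
(K^{\alpha,\beta}p)\times (Ap)=\alpha\,(Ap)\times(Ap)+\beta\,(Kp)\times(Ap)=\beta\,(Kp)\times(Ap),
\end{equation*}
since $(Ap)\times(Ap)=\bold{0}$. Hence the candidate system reads $\dot p=\beta\,(Kp)\times(Ap)$, i.e. it differs from \eqref{eq.2.2} only by the constant factor $\beta$ on the right-hand side.

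Now I would introduce the change of independent variable $\tau:=\beta t$ (permissible because $\beta\neq 0$) and set $p(t):=u(\beta t)$. Differentiating and using \eqref{eq.2.2} evaluated at $\beta t$ gives
\begin{equation*}
\dot p(t)=\beta\,\dot u(\beta t)=\beta\,(K u(\beta t))\times (A u(\beta t))=\beta\,(K p(t))\times (A p(t)) = (K^{\alpha,\beta}p(t))\times(Ap(t)),
\end{equation*}
which is exactly the claimed equivalence. Conversely, the same relation read in reverse shows that any solution $p$ of the new system produces a solution $u(s)=p(s/\beta)$ of \eqref{eq.2.2}, so the correspondence is bijective and the two dynamics are indeed homothetically equivalent. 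The content of the proposition, once unpacked, is thus the simple observation that the $\alpha A$ summand in $K^{\alpha,\beta}$ is annihilated by the cross product with $Ap$, leaving only a harmless rescaling by $\beta$.
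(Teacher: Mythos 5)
Your argument is correct, and the central algebraic observation --- that $(K^{\alpha,\beta}p)\times(Ap)=\beta\,(Kp)\times(Ap)$ because the $\alpha A$ summand contributes $\alpha\,(Ap)\times(Ap)=0$ --- is exactly the identity the paper's proof relies on. Where you diverge is in how the leftover factor $\beta$ is absorbed: you reparametrize time, setting $p(t)=u(\beta t)$, whereas the paper performs a \emph{spatial} homothety, $u(t)=\beta\,p(t)$, with no change of the time variable. The spatial version works because the vector field is homogeneous of degree two: substituting $u=\beta p$ multiplies the right-hand side by $\beta^{2}$ and the left-hand side by $\beta$, and the single surviving factor of $\beta$ is precisely what turns $\beta K$ into $K^{\alpha,\beta}=\alpha A+\beta K$ after adding the vanishing cross-product term. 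The two routes therefore establish slightly different statements: the paper's map $p\mapsto\beta p$ is literally a homothety of $\R^{3}$ (which is what ``homothetically equivalent'' is meant to assert), and this is what lets Remark \ref{rem.5.2} conclude that \eqref{eq.2.2} is \emph{linearly} equivalent --- a genuine change of coordinates with the same time parameter --- to the diagonal system \eqref{qhdiag} after composing with Propositions \ref{pr.5.1} and \ref{pr.5.3}. Your time rescaling instead yields an orbital equivalence (identical phase portrait, trajectories traversed $\beta$ times as fast, and with reversed orientation when $\beta<0$), which is adequate for most qualitative conclusions but is not a conjugacy by a map of the phase space, so it would weaken the chain of linear equivalences assembled afterwards. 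You are right, and it is worth noting, that the positive definiteness of $K^{\alpha,\beta}$ plays no role at this step and is only consumed when the transformed system is fed into Proposition \ref{pr.5.1}.
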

\begin{proof}
We show that:
\begin{equation*}
u(t)=\beta p(t),
\end{equation*}
where $t\mapsto u(t)$ is a solution of the dynamical system
\eqref{eq.2.2}:
$$\dot u=(Ku)\times(Au),$$
and respectively $t\mapsto p(t)$, is a solution of the dynamical
system
$$\dot p=(K^{\alpha,\beta}p)\times(Ap).$$
To prove this assertion, note first that:
\begin{align*}
(Ku)\times (Au)&=(K\beta p)\times (A\beta p)\\
&=\beta[(\beta K p)\times (A p)]\\
&=\beta[((\alpha A+\beta K) p)\times (A p)]\\
&=\beta[(K^{\alpha,\beta} p)\times (A p)].
\end{align*}
Hence,
\begin{align*}
\dot u=(Ku)\times (Au)&\Leftrightarrow \beta \dot p=\beta[(K^{\alpha,\beta} p)\times (A p)]\\
&\Leftrightarrow\dot p=(K^{\alpha,\beta}p)\times(Ap).
\end{align*}
\end{proof}

\begin{proposition}\label{pr.5.3}
The system \eqref{hat} is orthogonally equivalent to
the dynamical system:
\begin{equation}\label{qhdiag}
\dot v=v\times(\hat{D}v),\ v\in\R^3,
\end{equation}
where $\hat{D}:=\operatorname{diag}(\lambda_1,\lambda_2,\lambda_3)\in\mathfrak{gl}(3,\mathbb{R})$, and $R\in O(\operatorname{Id})$ is an
orthogonal matrix such that $R^T\hat{A}R=\hat{D}$.
\end{proposition}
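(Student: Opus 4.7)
The plan is to exhibit an explicit linear change of variables that conjugates the two vector fields, exactly as was done in Propositions \ref{pr.5.1} and \ref{pr.5.2}. The matrix $\hat{A} = L^{-1}A(L^{-1})^{T}$ is symmetric (since $A$ is symmetric), so the spectral theorem guarantees the existence of $R \in O(3)$ with $R^{T}\hat{A}R = \hat{D} = \operatorname{diag}(\lambda_1,\lambda_2,\lambda_3)$; this is the $R$ the statement refers to.

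The substitution I would try first is the natural one, $w(t) = Rv(t)$. Differentiating gives $\dot w = R\dot v$, and computing the right-hand side yields
\begin{align*}
w \times (\hat{A}w) &= (Rv)\times(\hat{A}Rv) = (Rv)\times(R\hat{D}v) = \det(R)\, R\bigl(v \times (\hat{D}v)\bigr),
\end{align*}
where I use the standard identity $(Rx)\times(Ry) = \det(R)\, R(x\times y)$ valid for every $R \in O(3)$. Thus $\dot w = w\times(\hat{A}w)$ is equivalent to $R\dot v = \det(R)\, R(v\times\hat{D}v)$, i.e.\ $\dot v = \det(R)(v\times\hat{D}v)$.

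The potential obstacle is the sign $\det(R) = \pm 1$: if $R \in SO(3)$ we are done immediately, but if $\det(R) = -1$ the orientation of the cross product flips. This is harmless in two equivalent ways. Either one replaces $R$ by $R' := R\,\operatorname{diag}(-1,1,1) \in SO(3)$, which still diagonalizes $\hat{A}$ to the same $\hat{D}$ (eigenvectors can be rescaled by $\pm 1$ without changing the eigenvalues), so without loss of generality $R \in SO(3)$; or, mirroring the normalization trick used in Proposition \ref{pr.5.1}, one uses the substitution $w(t) = \det(R)\, Rv(t)$, for which a direct computation using $(\det R)^{2}=1$ yields
\begin{align*}
w\times(\hat{A}w) &= (\det R)^{2}\,(Rv)\times(R\hat{D}v) = \det(R)\, R\bigl(v\times(\hat{D}v)\bigr),
\end{align*}
so that $\det(R)R\dot v = \det(R)R(v\times\hat{D}v)$, giving $\dot v = v\times(\hat{D}v)$. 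Either choice exhibits the orthogonal equivalence claimed in \eqref{qhdiag}.
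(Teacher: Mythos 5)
Your proof is correct and is essentially the paper's argument: the paper simply observes that \eqref{hat} is of type \eqref{eq.2.2} with $K=\operatorname{Id}=RR^{T}$ and invokes Proposition \ref{pr.5.1} with $L=R$, whose substitution $w=\det(R^{-1})R^{-T}v=\det(R)Rv$ is exactly your second variant. You merely unwind that citation into an explicit computation and handle the $\det(R)=\pm1$ normalization directly, which the cited proposition already takes care of.
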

\begin{proof}
Note first that the equation \eqref{hat} is of the type $(\R^3,\{\cdot,\cdot\}_{K},H_{A})$, for $K=\operatorname{Id}$ and $A=\hat{A}$. The proof follows by Proposition \eqref{pr.5.1}, for $L=R\in O(\operatorname{Id})$ such that $R^T\hat{A}R=\hat{D}$, where $\hat{D}:=\operatorname{diag}(\lambda_1,\lambda_2,\lambda_3)\in\mathfrak{gl}(3,\mathbb{R})$.
\end{proof}

\begin{remark}\label{rem.5.2}
If there exists $\alpha,\beta\in\mathbb{R}$, $\beta\neq 0$ such that the symmetric matrix $K^{\alpha,\beta}$ is positive definite, then the dynamical system \eqref{eq.2.2} is
linearly equivalent to the system \eqref{qhdiag}, where $\hat{D}:=\operatorname{diag}(\lambda_1,\lambda_2,\lambda_3)$ is the diagonal form
of the matrix $\hat{A}\in Sym(3)$ which generates the system \eqref{hat}.
Note that, using coordinates, the system \eqref{qhdiag} becomes:
$$\left\{\begin{array}{l}
\dot x_1=(\lambda_3-\lambda_2)x_2 x_3,\\
\dot x_2=(\lambda_1-\lambda_3)x_1 x_3,\\
\dot x_3=(\lambda_2-\lambda_1)x_1 x_2.\\
\end{array}\right.
$$
\end{remark}
\bigskip
Hence, we proved that if there exists $\alpha,\beta\in\mathbb{R}$, $\beta\neq 0$ such that the symmetric matrix $K^{\alpha,\beta}:=\alpha A+\beta K$ is positive definite, then the quadratic and homogeneous Hamiltonian system $(\R^3,\{\cdot,\cdot\}_{K},H_{A})$ describes actually the relaxed free rigid body dynamics.


\section{Normal forms of quadratic Hamiltonian systems on $(\mathfrak{o}(K))^{*}$}
In this section we compute explicitly the normal forms of a general quadratic Hamiltonian system on $(\mathfrak{o}(K))^{*}$. We construct explicitly an affinely equivalent system to the dynamical system \eqref{eq.2.1}, and show that this system describes the relaxed free rigid body dynamics with three linear controls.

Let us start with the case when the symmetric matrix $K$ is positive definite. 
\begin{proposition}\label{pr.6.1}
If the matrix $K$ is positive definite, then the system \eqref{eq.2.1} is linearly equivalent to
the dynamical system:
\begin{equation}\label{hata}
\dot w=w\times(\hat{A}w+\hat{\bold{a}}),\ w\in\R^3,
\end{equation}
where the symmetric matrix $\hat{A}$ is given by $\hat{A}=:L^{-1}A(L^{-1})^{T}$, $L\in GL(3,\mathbb{R})$ such that $K=LL^{T}$, and $\hat{\bold{a}}:=\operatorname{det}(L)L^{-1}\bold{a}$.
\end{proposition}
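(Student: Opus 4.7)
The plan is to mimic the proof of Proposition \ref{pr.5.1} verbatim, using the same linear change of coordinates $u(t)=\operatorname{det}(L^{-1})L^{-T}w(t)$, and then handle the extra affine term $(Ku)\times\bold{a}$ separately. Since Proposition \ref{pr.5.1} already establishes the identity
\begin{equation*}
(Ku)\times(Au)=\operatorname{det}(L^{-1})L^{-T}[w\times(\hat{A}w)],
\end{equation*}
it suffices to show that, under the same substitution, the extra term $(Ku)\times\bold{a}$ pulls back to $\operatorname{det}(L^{-1})L^{-T}[w\times\hat{\bold{a}}]$ with precisely $\hat{\bold{a}}=\operatorname{det}(L)L^{-1}\bold{a}$.

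The key algebraic fact I would invoke is the standard identity for the cross product under a linear map, namely $(Mx)\times(My)=\operatorname{det}(M)M^{-T}(x\times y)$ for any $M\in GL(3,\mathbb{R})$. First I would rewrite $\bold{a}=L(L^{-1}\bold{a})$ so that both factors of the cross product $(Ku)\times\bold{a}$ involve $L$ acting on something, after which I would substitute $u=\operatorname{det}(L^{-1})L^{-T}w$ and compute
\begin{equation*}
(Ku)\times\bold{a}=\operatorname{det}(L^{-1})(Lw)\times(L L^{-1}\bold{a})=\operatorname{det}(L^{-1})\operatorname{det}(L)L^{-T}[w\times(L^{-1}\bold{a})]=L^{-T}[w\times(L^{-1}\bold{a})].
\end{equation*}
Matching this against the required form $\operatorname{det}(L^{-1})L^{-T}[w\times\hat{\bold{a}}]$ forces $\hat{\bold{a}}=\operatorname{det}(L)L^{-1}\bold{a}$, which is exactly the definition in the statement.

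Combining the homogeneous part supplied by Proposition \ref{pr.5.1} with this new affine computation yields
\begin{equation*}
(Ku)\times(Au+\bold{a})=\operatorname{det}(L^{-1})L^{-T}[w\times(\hat{A}w+\hat{\bold{a}})].
\end{equation*}
Since $\operatorname{det}(L^{-1})L^{-T}\dot w=\dot u$, equation \eqref{eq.2.1} becomes \eqref{hata} after cancelling the invertible factor $\operatorname{det}(L^{-1})L^{-T}$ on both sides. The positive definiteness of $K$ is used exactly as in Proposition \ref{pr.5.1}: it guarantees the existence of $L$ with $K=LL^T$ (e.g.\ by Cholesky) and hence the well-definedness of the change of variables.

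I do not expect any genuine obstacle; the computation is entirely routine once the correct substitution and the $(Mx)\times(My)$ identity are in hand. The only subtle point is choosing the normalization of $\hat{\bold{a}}$ so that the factors of $\operatorname{det}(L)$ and $\operatorname{det}(L^{-1})$ produced by the identity cancel cleanly, matching the prefactor $\operatorname{det}(L^{-1})L^{-T}$ already produced by the homogeneous piece---this is precisely what the definition $\hat{\bold{a}}:=\operatorname{det}(L)L^{-1}\bold{a}$ achieves.
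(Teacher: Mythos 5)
Your proof is correct and follows essentially the same route as the paper: the identical change of variables $u=\operatorname{det}(L^{-1})L^{-T}w$ and the identity $(Mx)\times(My)=\operatorname{det}(M)M^{-T}(x\times y)$, the only (cosmetic) difference being that you split $(Ku)\times(Au+\bold{a})$ by bilinearity and treat the affine term separately, while the paper carries $Au+\bold{a}$ through a single cross-product computation. Both land on the same normalization $\hat{\bold{a}}=\operatorname{det}(L)L^{-1}\bold{a}$, so nothing is missing.
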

\begin{proof}
Recall that $L^{-T}:=(L^{-1})^{T}$. We will show that:
\begin{equation*}
u(t)=\operatorname{det} (L^{-1})L^{-T}w(t),
\end{equation*}
where $t\mapsto u(t)$ is a solution of the dynamical system
\eqref{eq.2.1}:
$$\dot u=(Ku)\times(Au+\bold{a}),$$
and respectively $t\mapsto w(t)$, is a solution of the dynamical
system
$$\dot w=w\times(\hat{A}w+\hat{\bold{a}}).$$
To prove this assertion, note first that:
\begin{align*}
(Ku)\times (Au+\bold{a})&=(LL^{T}u)\times (Au+\bold{a})\\
&=[LL^{T}\operatorname{det}(L^{-1})L^{-T}w]\times [A\operatorname{det}(L^{-1})L^{-T}w+\bold{a}]\\
&=(\operatorname{det}(L^{-1}))^2[(Lw)\times (AL^{-T}w+\operatorname{det}(L)\bold{a})]\\
&=(\operatorname{det}(L^{-1}))^2[(Lw)\times (L\hat{A}w+\operatorname{det}(L)LL^{-1}\bold{a}))]\\
&=(\operatorname{det}(L^{-1}))^{2} \operatorname{det}(L)L^{-T}[w\times (\hat{A}w+\operatorname{det}(L)L^{-1}\bold{a})]\\
&=\operatorname{det}(L^{-1})L^{-T}[w\times (\hat{A}w+\hat{\bold{a}})].
\end{align*}
Hence,
\begin{align*}
\dot u=(Ku)\times (Au)&\Leftrightarrow \operatorname{det} (L^{-1})L^{-T}\dot w=\operatorname{det}(L^{-1})L^{-T}[w\times (\hat{A}w+\hat{\bold{a}})]\\
&\Leftrightarrow\dot w=w\times(\hat{A}w+\hat{\bold{a}}).
\end{align*}
\end{proof}
In the case when the matrix $K$ is not positive definite, but there exists $\alpha,\beta\in\mathbb{R}$ such that $\alpha K+\beta A$ is positive definite, we use the fact that the system \eqref{eq.2.1} is of the general type \eqref{general}, which by Proposition \eqref{cdd} is affinely equivalent to a dynamical system of the type \eqref{hata}. 

\begin{proposition}\label{pr.6.3}
The system \eqref{hata} is orthogonally equivalent to
the dynamical system:
\begin{equation}\label{hatadiag}
\dot v=v\times(\hat{D}v+\hat{\bold{d}}),\ v\in\R^3,
\end{equation}
where $\hat{D}:=\operatorname{diag}(\lambda_1,\lambda_2,\lambda_3)\in\mathfrak{gl}(3,\mathbb{R})$, $R\in O(\operatorname{Id})$ is an
orthogonal matrix such that $R^T\hat{A}R=\hat{D}$, and $\hat{\bold{d}}:=\operatorname{det}(R)R^{T}\hat{\bold{a}}$.
\end{proposition}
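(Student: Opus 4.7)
The plan is to reduce this proposition directly to Proposition \ref{pr.6.1}, in exactly the same way that Proposition \ref{pr.5.3} was reduced to Proposition \ref{pr.5.1}. The key observation is that equation \eqref{hata} itself has the form of the general equation \eqref{eq.2.1}: we can rewrite it as $\dot w = (\operatorname{Id}\cdot w)\times(\hat{A}w+\hat{\bold{a}})$, so the role of the matrix $K$ is now played by $\operatorname{Id}$ (which is trivially positive definite) and the roles of $A$ and $\bold{a}$ are played by $\hat{A}$ and $\hat{\bold{a}}$.

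Since $\hat{A}$ is real symmetric, the spectral theorem produces an orthogonal matrix $R \in O(\operatorname{Id})$ with $R^T\hat{A}R = \hat{D} = \operatorname{diag}(\lambda_1,\lambda_2,\lambda_3)$. I would then invoke Proposition \ref{pr.6.1} with the factorization $\operatorname{Id} = LL^T$ given by $L := R$. This factorization is legitimate even though $R$ need not be lower triangular: the proof of Proposition \ref{pr.6.1} only uses the identity $K = LL^T$ with $L \in GL(3,\mathbb{R})$ and never invokes Cholesky's theorem. Under this identification, Proposition \ref{pr.6.1} yields the linear substitution $w = \operatorname{det}(L^{-1})L^{-T}v$ which transforms \eqref{hata} into a system of the form $\dot v = v\times(\widetilde{A}v+\widetilde{\bold{a}})$ with $\widetilde{A} = L^{-1}\hat{A}(L^{-1})^T$ and $\widetilde{\bold{a}} = \operatorname{det}(L)L^{-1}\hat{\bold{a}}$.

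Using the orthogonality $R^{-1} = R^T$, these expressions collapse to $\widetilde{A} = R^T\hat{A}R = \hat{D}$ and $\widetilde{\bold{a}} = \operatorname{det}(R)R^T\hat{\bold{a}} = \hat{\bold{d}}$, matching the statement verbatim. The transformation $w = \operatorname{det}(R^{-1})R^{-T}v = \operatorname{det}(R)Rv$ is orthogonal up to the sign $\operatorname{det}(R) = \pm 1$, so the equivalence between \eqref{hata} and \eqref{hatadiag} is indeed an orthogonal equivalence. I do not expect any real obstacle; the argument is pure bookkeeping on top of Proposition \ref{pr.6.1}, and the only point worth flagging is the mild observation that the proof of Proposition \ref{pr.6.1} does not require $L$ to be triangular, which is precisely what allows the choice $L = R$.
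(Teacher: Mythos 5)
Your proposal is correct and coincides with the paper's own proof: the paper likewise observes that \eqref{hata} is a system of type \eqref{eq.2.1} with $K=\operatorname{Id}$, $A=\hat{A}$, $\bold{a}=\hat{\bold{a}}$, and then applies Proposition \ref{pr.6.1} with $L=R\in O(\operatorname{Id})$ diagonalizing $\hat{A}$. Your extra remarks (that the factorization $\operatorname{Id}=RR^{T}$ need not be triangular, and that $\det(R)R$ is orthogonal) only make explicit what the paper leaves implicit.
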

\begin{proof}
Note first that the equation \eqref{hata} is of the type $(\R^3,\{\cdot,\cdot\}_{K},H_{(A,\bold{a})})$, where $K=\operatorname{Id}$, $A=\hat{A}$ and $\bold{a}=\hat{\bold{a}}$. The proof follows by Proposition \eqref{pr.6.1}, for $L=R\in O(\operatorname{Id})$ such that $R^T\hat{A}R=\hat{D}$, where $\hat{D}:=\operatorname{diag}(\lambda_1,\lambda_2,\lambda_3)\in\mathfrak{gl}(3,\mathbb{R})$, and $\hat{\bold{d}}:=\operatorname{det}(R)R^{T}\hat{\bold{a}}$.
\end{proof}
\begin{remark}
If there exists $\alpha,\beta\in\mathbb{R}$ such that $\alpha K+\beta A$ is positive definite, then the dynamical system \eqref{eq.2.2} is affinely equivalent to the system \eqref{hatadiag}, where $\hat{D}:=\operatorname{diag}(\lambda_1,\lambda_2,\lambda_3)$ is the diagonal form of the matrix $\hat{A}\in Sym(3)$ which generates the corresponding quadratic Hamiltonian system of the type \eqref{hata}.
Using coordinates, the system \eqref{hatadiag} becomes:
$$\left\{\begin{array}{l}
\dot x_{1}=(\lambda_3-\lambda_2)x_{2}x_{3}+d_{3}x_{2}-d_{2}x_{3},\\
\dot x_{2}=(\lambda_1-\lambda_3)x_{1}x_{3}-d_{3}x_{1}+d_{1}x_{3},\\
\dot x_{3}=(\lambda_2-\lambda_1)x_{1}x_{2}+d_{2}x_{1}-d_{1}x_{2},\\
\end{array}\right.
$$
where $(d_1,d_2,d_3)$ are the coordinates of $\hat{\bold{d}}$.
\end{remark}
\bigskip
Hence, we proved that if there exists $\alpha,\beta\in\mathbb{R}$ such that $\alpha K+\beta A$ is positive definite, then the quadratic Hamiltonian system $(\R^3,\{\cdot,\cdot\}_{K},H_{(A,\bold{a})})$ describes actually the relaxed free rigid body dynamics with three linear controls.

\section{A large class of three dimensional quadratic Hamiltonian systems}

In this section we give a unified and also a generalized formulation of the results obtained in the previous sections, by considering a general quadratic Hamiltonian system on a natural extension of $(\mathfrak{o}(K))^*$, where the pair $(K,\bold{k})\in Sym(3)\times \mathbb{R}^3$ is supposed to be fixed.

Let us now start by introducing the Poisson manifold $(\mathbb{R}^{3},\{\cdot,\cdot\}_{(K,\bold{k})})$, where the Poisson bracket $\{\cdot,\cdot\}_{(K,\bold{k})}$ is defined by:
$$\{f,g\}_{(K,\bold{k})}:=-\nabla C_{(K,\bold{k})}\cdot(\nabla f\times\nabla g),$$ 
for any $f,g\in C^\infty(\R^3,\R)$, and the smooth function $C_{(K,\bold{k})}\in C^\infty(\R^3,\R)$ is given by
$$C_{(K,\bold{k})}(u):=\dfrac{1}{2}u^{T}Ku+u^{T}\bold{k}.$$
Note that by the definition of the Poisson bracket, it follows that the smooth function $C_{(K,\bold{k})}$ is a Casimir invariant.

Consequently, a quadratic Hamiltonian system on $(\mathbb{R}^{3},\{\cdot,\cdot\}_{(K,\bold{k})})$, is generated by a smooth function $H_{(A,\bold{a})}\in C^\infty(\R^3,\R)$, given by $$H_{(A,\bold{a})}(u):=\dfrac{1}{2}u^{T}Au+u^{T}\bold{a},$$
where $A\in Sym(3)$ is an arbitrary real symmetric matrix, and $\bold{a}\in\mathbb{R}^3$. 

Hence, the associated Hamiltonian system is given by
\begin{equation}\label{general}
\dot u=(Ku+\bold{k})\times(Au+\bold{a}),\ u\in\R^3.
\end{equation}

\begin{remark}\label{rok}
The dynamics (\ref{general}) admits a family of Hamilton-Poisson realizations parametrized by the Lie group $SL(2;\R)$.
More exactly, $(\R^3,\{\cdot,\cdot\}_{\alpha,\beta},H^{\gamma,\delta})$ is a Hamilton-Poisson realization of the
dynamics (\ref{general}), where $\left[ {\begin{array}{*{20}c}
   \alpha & \beta  \\
   \gamma & \delta  \\
\end{array}} \right]\in SL(2,\R)$, the bracket $\{\cdot,\cdot\}_{\alpha,\beta}$ is defined by
$$\{f,g\}_{\alpha,\beta}:=-\nabla C^{\alpha,\beta}\cdot(\nabla f\times\nabla g),$$ for any $f,g\in C^\infty(\R^3,\R)$,
and the functions $C^{\alpha,\beta},H^{\gamma,\delta}\in C^\infty(\R^3,\R)$ are given by:
$$C^{\alpha,\beta}(u):=\dfrac{1}{2}[u^{T}(\alpha K+\beta A)u]+u^T(\alpha \bold{k}+ \beta \bold{a}),$$
$$H^{\gamma,\delta}(u):=\dfrac{1}{2}[u^{T}(\gamma K+\delta A)u]+ u^T(\gamma \bold{k}+ \delta\bold{a}),$$
for any $u\in\R^3$.
\end{remark}

Next theorem shows that for $K$ positive definite, a quadratic Hamiltonian system on $(\mathbb{R}^{3},\{\cdot,\cdot\}_{(K,\bold{k})})$ is affinely equivalent to a quadratic Hamiltonian system on $(\mathfrak{so}(3))^{*}$, and consequently equivalent to the relaxed free rigid body dynamics with three linear controls, as already proved in the above section.

\begin{theorem}\label{th.6.1}
If the matrix $K$ is positive definite, then the system \eqref{general} is affinely equivalent to
the dynamical system:
\begin{equation*}
\dot w=w\times(\hat{A}w+\hat{\bold{a}}),\ w\in\R^3,
\end{equation*}
where the symmetric matrix $\hat{A}$ is given by $\hat{A}=:L^{-1}A(L^{-1})^{T}$, $L\in GL(3,\mathbb{R})$ such that $K=LL^{T}$, and $\hat{\bold{a}}:=\operatorname{det}(L)L^{-1}(\bold{a}-AK^{-1}\bold{k})$.
\end{theorem}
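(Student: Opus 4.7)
The plan is to reduce the affine system \eqref{general} to the purely quadratic form \eqref{eq.2.1} by an affine translation of the state variable, and then invoke Proposition \ref{pr.6.1} on the resulting reduced system. Since $K$ is positive definite it is invertible, so the change of variable
\begin{equation*}
u = \tilde u - K^{-1}\mathbf{k}
\end{equation*}
is a well-defined affine diffeomorphism of $\mathbb{R}^3$.

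Next, I would compute directly how the two factors of the cross product transform under this translation. Because $\tilde u = u + K^{-1}\mathbf{k}$, we have $Ku + \mathbf{k} = K\tilde u$ (the constant is absorbed), while $Au + \mathbf{a} = A\tilde u + (\mathbf{a} - AK^{-1}\mathbf{k})$. Setting $\tilde{\mathbf{a}} := \mathbf{a} - AK^{-1}\mathbf{k}$, the system \eqref{general} therefore becomes
\begin{equation*}
\dot{\tilde u} = (K\tilde u)\times(A\tilde u + \tilde{\mathbf{a}}),
\end{equation*}
which is precisely of the form \eqref{eq.2.1} with data $(A,\tilde{\mathbf{a}})$ in place of $(A,\mathbf{a})$.

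At this point Proposition \ref{pr.6.1} applies directly to the reduced system: it provides a linear equivalence $\tilde u(t) = \det(L^{-1})L^{-T}w(t)$ with $L\in GL(3,\mathbb{R})$ satisfying $K = LL^{T}$, carrying $\dot{\tilde u} = (K\tilde u)\times(A\tilde u + \tilde{\mathbf{a}})$ to $\dot w = w\times(\hat A w + \hat{\mathbf{a}})$, where $\hat A = L^{-1}A(L^{-1})^{T}$ and, by the formula of that proposition applied to the new constant vector, $\hat{\mathbf{a}} = \det(L)L^{-1}\tilde{\mathbf{a}} = \det(L)L^{-1}(\mathbf{a} - AK^{-1}\mathbf{k})$. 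This is exactly the $\hat{\mathbf{a}}$ stated in the theorem.

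Finally, composing the initial affine translation $u\leftrightarrow\tilde u$ with the linear change of coordinates $\tilde u\leftrightarrow w$ supplied by Proposition \ref{pr.6.1} yields an affine (generally not linear) equivalence between \eqref{general} and the target dynamics, which is the claim. No genuinely hard step is expected here; the only item requiring care is choosing the translation vector as $-K^{-1}\mathbf{k}$ so that the constant in the left factor of the cross product is annihilated and the constant in the right factor becomes $\mathbf{a} - AK^{-1}\mathbf{k}$, which is precisely what makes the formula for $\hat{\mathbf{a}}$ in the theorem consistent with the formula from Proposition \ref{pr.6.1}.
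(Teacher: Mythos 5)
Your proof is correct and amounts to the same argument as the paper's: the composite change of variables you construct, $u=\det(L^{-1})L^{-T}w-K^{-1}\mathbf{k}$, is exactly the affine map the paper verifies directly in one computation. You merely factor it as the translation by $-K^{-1}\mathbf{k}$ (which reduces \eqref{general} to the form \eqref{eq.2.1}) followed by the linear equivalence of Proposition \ref{pr.6.1}, which is a slightly more modular but substantively identical route.
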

\begin{proof}
Recall that $L^{-T}:=(L^{-1})^{T}$. We will show that:
\begin{equation*}
u(t)=\operatorname{det} (L^{-1})L^{-T}w(t)-K^{-1}\bold{k},
\end{equation*}
where $t\mapsto u(t)$ is a solution of the dynamical system
\eqref{general}:
$$\dot u=(Ku+\bold{k})\times(Au+\bold{a}),$$
and respectively $t\mapsto w(t)$, is a solution of the dynamical
system
$$\dot w=w\times(\hat{A}w+\hat{\bold{a}}).$$
To prove this assertion, note first that:
\begin{align*}
(Ku+\bold{k})&\times (Au+\bold{a})=[K(\operatorname{det} (L^{-1})L^{-T}w-K^{-1}\bold{k})+\bold{k}]\times \\
&\times [A(\operatorname{det} (L^{-1})L^{-T}w-K^{-1}\bold{k})+\bold{a}]\\
&=(LL^{T}\operatorname{det}(L^{-1})L^{-T}w)\times [A\operatorname{det}(L^{-1})L^{-T}w+(\bold{a}-AK^{-1}\bold{k})]\\
&=(\operatorname{det}(L^{-1})Lw)\times [A\operatorname{det}(L^{-1})L^{-T}w+(\bold{a}-AK^{-1}\bold{k})]\\
&=(\operatorname{det}(L^{-1})Lw)\times [\operatorname{det}(L^{-1})AL^{-T}w+\operatorname{det}(L^{-1})L\hat{\bold{a}}]\\
&=(\operatorname{det}(L^{-1}))^{2}[(Lw)\times (AL^{-T}w+L\hat{\bold{a}})]\\
&=(\operatorname{det}(L^{-1}))^{2}[(Lw)\times (L\hat{A}w+L\hat{\bold{a}})]\\
&=(\operatorname{det}(L^{-1}))^{2}\operatorname{det}(L)L^{-T}[w\times (\hat{A}w+\hat{\bold{a}})]\\
&=\operatorname{det}(L^{-1})L^{-T}[w\times (\hat{A}w+\hat{\bold{a}})].
\end{align*}
Hence,
\begin{align*}
\dot u=(Ku+\bold{k})\times (Au+\bold{a})&\Leftrightarrow (\operatorname{det} (L^{-1}))L^{-T}\dot w=(\operatorname{det}(L^{-1}))L^{-T}[w\times (\hat{A}w+\hat{\bold{a}})]\\
&\Leftrightarrow\dot w=w\times(\hat{A}w+\hat{\bold{a}}).
\end{align*}
\end{proof}

Using the Remark \eqref{rok}, note that if there exists $\alpha,\beta\in\R$ such that $\alpha K+\beta A$ is positive definite, then for any $\gamma,\delta\in\R$ such that $\left[ {\begin{array}{*{20}c}
   \alpha & \beta  \\
   \gamma & \delta  \\
\end{array}} \right]\in SL(2,\R)$, the system \eqref{general} can be written in the equivalent form
\begin{equation}\label{egs}
\dot u=(P^{\alpha,\beta}u+\bold{p^{\alpha,\beta}})\times(A^{\gamma,\delta}u+\bold{a^{\gamma,\delta}}),
\end{equation}
where $P^{\alpha,\beta}=\alpha K+\beta A$, $\bold{p^{\alpha,\beta}}=\alpha \bold{k}+\beta\bold{a}$, $A^{\gamma,\delta}=\gamma K+\delta A$, and $\bold{a^{\gamma,\delta}}=\gamma \bold{k}+\delta \bold{a}$. Consequently, using this remark and the Theorem \eqref{th.6.1} we obtain the following result.

\begin{proposition}\label{cdd}
If there exists $\alpha,\beta\in\R$ such that $P^{\alpha,\beta}=\alpha K+\beta A$ is positive definite, then the system \eqref{general} is affinely equivalent to the dynamical system
\begin{equation*}
\dot w=w\times(\hat{A}w+\hat{\bold{a}}),\ w\in\R^3,
\end{equation*}
where the symmetric matrix $\hat{A}$ is given by $\hat{A}=:L^{-1}A^{\gamma,\delta}(L^{-1})^{T}$, $L\in GL(3,\mathbb{R})$ such that $P^{\alpha,\beta}=LL^{T}$,  $\hat{\bold{a}}:=\operatorname{det}(L)L^{-1}[\bold{a^{\gamma,\delta}}-A^{\gamma,\delta}(P^{\alpha,\beta})^{-1}\bold{p^{\alpha,\beta}}]$, and respectively $\gamma,\delta\in\R$ such that $\left[ {\begin{array}{*{20}c}
   \alpha & \beta  \\
   \gamma & \delta  \\
\end{array}} \right]\in SL(2,\R)$.
\end{proposition}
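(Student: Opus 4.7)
The plan is to reduce this statement directly to Theorem \ref{th.6.1} by exploiting the $SL(2,\R)$-family of Hamilton--Poisson realizations provided by Remark \ref{rok}.

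First I would invoke Remark \ref{rok} applied to the system \eqref{general}. Given $\alpha,\beta\in\R$ with $P^{\alpha,\beta}=\alpha K+\beta A$ positive definite, I pick any $\gamma,\delta\in\R$ such that $\alpha\delta-\beta\gamma=1$. Writing out the Hamilton equations $\dot u=\nabla C^{\alpha,\beta}(u)\times\nabla H^{\gamma,\delta}(u)$ and computing the gradients of the quadratic-plus-linear functions $C^{\alpha,\beta}$ and $H^{\gamma,\delta}$ yields exactly the equivalent form \eqref{egs},
\begin{equation*}
\dot u=(P^{\alpha,\beta}u+\bold{p^{\alpha,\beta}})\times(A^{\gamma,\delta}u+\bold{a^{\gamma,\delta}}).
\end{equation*}

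Second, I would observe that \eqref{egs} has the same structural form as \eqref{general} under the formal substitution $(K,\bold{k},A,\bold{a})\mapsto(P^{\alpha,\beta},\bold{p^{\alpha,\beta}},A^{\gamma,\delta},\bold{a^{\gamma,\delta}})$. Since by hypothesis the ``new $K$'' — namely $P^{\alpha,\beta}$ — is positive definite, Theorem \ref{th.6.1} applies verbatim to this rewritten system.

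Third, I would simply read off the conclusion of Theorem \ref{th.6.1} with the substituted data. The theorem produces a Cholesky-type factor $L\in GL(3,\R)$ with $P^{\alpha,\beta}=LL^{T}$, together with the affine change of variable $u(t)=\det(L^{-1})L^{-T}w(t)-(P^{\alpha,\beta})^{-1}\bold{p^{\alpha,\beta}}$, and asserts that the transformed dynamics reads $\dot w=w\times(\hat{A}w+\hat{\bold{a}})$ with $\hat{A}=L^{-1}A^{\gamma,\delta}(L^{-1})^{T}$ and $\hat{\bold{a}}=\det(L)L^{-1}\bigl[\bold{a^{\gamma,\delta}}-A^{\gamma,\delta}(P^{\alpha,\beta})^{-1}\bold{p^{\alpha,\beta}}\bigr]$, which matches the stated formulas exactly.

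The only point requiring any care is the first step — checking that Remark \ref{rok} actually outputs \eqref{egs} in the claimed bracketed $\times$ form — but this is a direct gradient computation using $\nabla(\tfrac{1}{2}u^{T}Mu+u^{T}\bold{m})=Mu+\bold{m}$ for symmetric $M$, together with the identity $\nabla C^{\alpha,\beta}\times\nabla H^{\gamma,\delta}=(\alpha\delta-\beta\gamma)(Ku+\bold{k})\times(Au+\bold{a})$ in which the $SL(2,\R)$ normalization collapses the mixed $K$-$A$ cross terms to the original right-hand side of \eqref{general}. No genuine obstacle arises; this proposition is a packaging of Theorem \ref{th.6.1} by means of the $SL(2,\R)$-symmetry of the underlying Hamilton--Poisson structure.
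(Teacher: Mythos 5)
Your proposal is correct and follows exactly the route the paper takes: the paper derives Proposition \ref{cdd} in one line by rewriting \eqref{general} in the form \eqref{egs} via Remark \ref{rok} and then applying Theorem \ref{th.6.1} with $(K,\bold{k},A,\bold{a})$ replaced by $(P^{\alpha,\beta},\bold{p^{\alpha,\beta}},A^{\gamma,\delta},\bold{a^{\gamma,\delta}})$. Your verification of the identity $\nabla C^{\alpha,\beta}\times\nabla H^{\gamma,\delta}=(\alpha\delta-\beta\gamma)(Ku+\bold{k})\times(Au+\bold{a})$ is the only substantive check needed, and it is right.
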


Hence, by combining this result with Proposition \eqref{pr.6.3}, we proved that if there exists $\alpha,\beta\in\mathbb{R}$ such that the symmetric matrix $P^{\alpha,\beta}:=\alpha K+\beta A$ is positive definite, then the quadratic Hamiltonian system \eqref{general},   $(\R^3,\{\cdot,\cdot\}_{K,\bold{k}},H_{A,\bold{a}})$, describes actually the relaxed free rigid body dynamics with three linear controls. 

Since any first-order autonomous three-dimensional differential equation possessing two independent quadratic constants of motion can be written as a Hamilton-Poisson dynamical system of type \eqref{general}(eventually after a time re-parameterization; see for details \cite{tudoran}), we obtain the following consequence.

\begin{remark}
Any first-order autonomous three-dimensional differential equation possessing two independent quadratic constants of motion which admits a positive/negative definite linear combination, is affinely equivalent to the classical "relaxed" free rigid body dynamics with linear linear controls.
\end{remark}

Note that if there exists $\alpha,\beta,\gamma,\delta\in\R$ as in Proposition \eqref{cdd} such that the following relation holds true $$\bold{a^{\gamma,\delta}}-A^{\gamma,\delta}(P^{\alpha,\beta})^{-1}\bold{p^{\alpha,\beta}}=0,$$ then by Proposition \eqref{pr.5.3}, the system \eqref{general} is affinely equivalent to the classical relaxed free rigid body dynamics. 

Next result gives another sufficient condition for which the general system \eqref{general} is affinely equivalent to the classical relaxed free rigid body dynamics, via the Propositions \eqref{pr.5.1}, \eqref{pr.5.3}. 
\begin{proposition}
If there exists $\alpha,\beta\in\mathbb{R}$, $\beta\neq 0$ and $\gamma \in\R^3$ such that the symmetric matrix $K^{\alpha,\beta}=\alpha A+\beta K$ is positive definite, and $A\gamma +\bold{a}=0$, $K\gamma +\bold{k}=0$, then the system \eqref{general} is homothetically equivalent to the dynamical system:
$$\dot p=(K^{\alpha,\beta}p)\times(Ap),\ p\in\R^3.$$
\end{proposition}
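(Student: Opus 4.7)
The plan is to reduce the affine system to the purely quadratic case by first killing the constant vectors $\bold{k}$ and $\bold{a}$ with a translation by $\gamma$, and then invoking Proposition \ref{pr.5.2} directly. The hypotheses $K\gamma + \bold{k} = 0$ and $A\gamma + \bold{a} = 0$ rewrite as $\bold{k} = -K\gamma$ and $\bold{a} = -A\gamma$, so the right-hand side of \eqref{general} factors as
\begin{equation*}
(Ku+\bold{k})\times(Au+\bold{a}) = K(u-\gamma) \times A(u-\gamma).
\end{equation*}
This is the essential algebraic observation; the rest is a scaling argument combined with this translation.

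Concretely, I would propose the change of variables $u(t) = \beta\, p(t) + \gamma$, so that $\dot u = \beta \dot p$ and $u - \gamma = \beta p$. Substituting into \eqref{general} and using the factorization above gives
\begin{equation*}
\beta\,\dot p = (K \beta p)\times(A\beta p) = \beta^{2}(Kp)\times(Ap).
\end{equation*}
At this point the computation is identical in spirit to the proof of Proposition \ref{pr.5.2}: since $(Ap)\times(Ap)=0$, one has $(K^{\alpha,\beta} p)\times(Ap) = (\alpha A p + \beta K p)\times (Ap) = \beta (Kp)\times(Ap)$, so
\begin{equation*}
\beta^{2}(Kp)\times(Ap) = \beta\,(K^{\alpha,\beta} p)\times(Ap).
\end{equation*}
Dividing both sides by $\beta\neq 0$ yields exactly $\dot p = (K^{\alpha,\beta} p)\times (Ap)$, as required.

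The step I expect to be the main conceptual point (rather than an obstacle) is clarifying that the affine map $u = \beta p + \gamma$ still counts as a \emph{homothety} in the sense used in the paper: it is the dilation of ratio $\beta$ centered at $\gamma$, and because the hypotheses on $\gamma$ ensure that $\gamma$ is a common equilibrium of the two ``factors'' $Ku+\bold{k}$ and $Au+\bold{a}$, this translation does not introduce any new constant terms after scaling. No serious obstacle is expected: the argument is a straightforward combination of (i) the translation that trivializes the constant terms and (ii) the scaling from Proposition \ref{pr.5.2}. One could alternatively phrase the proof as an appeal to Proposition \ref{pr.5.2} applied to the intermediate system $\dot v = (Kv)\times(Av)$ obtained via $v = u - \gamma$, which makes the ``homothetic equivalence'' with center $\gamma$ manifest.
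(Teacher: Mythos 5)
Your proof is correct and follows essentially the same route as the paper: the author also uses the substitution $u(t)=\beta p(t)+\gamma$, cancels the constant terms via $A\gamma+\bold{a}=0$ and $K\gamma+\bold{k}=0$, and replaces $\beta Kp$ by $K^{\alpha,\beta}p=\alpha Ap+\beta Kp$ using $(Ap)\times(Ap)=0$ before dividing by $\beta\neq 0$. No differences worth noting.
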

\begin{proof}
We show that:
\begin{equation*}
u(t)=\beta p(t)+\gamma,
\end{equation*}
where $t\mapsto u(t)$ is a solution of the dynamical system
\eqref{general}:
$$\dot u=(Ku+\bold{k})\times(Au+\bold{a}),$$
and respectively $t\mapsto p(t)$, is a solution of the dynamical
system
$$\dot p=(K^{\alpha,\beta}p)\times(Ap).$$
To prove this assertion, note first that:
\begin{align*}
(Ku+\bold{k})\times (Au+\bold{a})&=(K\beta p+K\gamma +\bold{k})\times (A\beta p+A\gamma +\bold{a})\\
&=\beta[(\beta K p)\times (A p)]\\
&=\beta[((\alpha A+\beta K) p)\times (A p)]\\
&=\beta[(K^{\alpha,\beta} p)\times (A p)].
\end{align*}
Hence,
\begin{align*}
\dot u=(Ku+\bold{k})\times (Au+\bold{a})&\Leftrightarrow \beta \dot p=\beta[(K^{\alpha,\beta} p)\times (A p)]\\
&\Leftrightarrow\dot p=(K^{\alpha,\beta}p)\times(Ap).
\end{align*}
\end{proof}

\subsection*{Acknowledgment}
This work was supported by a grant of the Romanian National Authority for Scientific Research, CNCS - UEFISCDI, project number PN-II-RU-TE-2011-3-0103.

\end{document}